\documentclass[journal,10pt]{IEEEtran}
\usepackage{graphicx,psfrag,subfigure,url,amsmath,amsfonts,
epsfig,amssymb}

%\hyphenation{op-tical net-works semi-conduc-tor IEEEtran}
%\interdisplaylinepenalty=2500

%\newtheorem{proposition}{Proposition}

%\newtheorem{corollary}[propo]{Corollary}
%\newtheorem{theorem}[propo]{Theorem}

%\newtheorem{fact}[propo]{Fact}

\newtheorem{claim}{Claim}
\newtheorem{lemma}{Lemma}
\newtheorem{remark}{Remark}
\newtheorem{bound}{Bound}

\long\def\symbolfootnote[#1]#2{\begingroup
\def\thefootnote{\fnsymbol{footnote}}\footnote[#1]{#2}\endgroup}

\begin{document}

%opening
\title{An outer bound for 2-receiver discrete memoryless broadcast channels}
\author{Chandra Nair \\ ~ \\ Department of Information Engineering \\  The Chinese University of Hong Kong}

\maketitle

\begin{abstract}
An outer bound to the discrete memoryless broadcast channel is presented. We compare it to the known outer bounds and show that the outer bound presented is at least as tight as the existing bounds.
\end{abstract}

\section{Introduction}
\label{se:intro}

There has been a series of outer bounds presented to the capacity region of the broadcast channel \cite{nae07,lik07,lks08}. All the bounds follow from the use of Fano's inequality and the Csiszar sum lemma\cite{czk78}. In this note, we present another outer bound along these lines that is at least as tight as the known bounds.

\section{Two receiver broadcast channel with private messages only}

The following lemma presents an outer bound for the capacity region of the two receiver discrete memoryless broadcast channels.  

\begin{lemma}
\label{le:bound}
Consider the set of all random variables $U,V,W_1,W_2$ such that $(U,V,W_1,W_2) \to X \to (Y_1,Y_2)$ for a Markov chain. Further assume that $U$ and $V$ are independent; and the distribution $(U,V,W_1,W_2,X,Y_1,Y_2)$ satisfies the following equalities:
\begin{align}
I(U;Y_1|W_1) &= I(U;Y_1|V,W_1) \nonumber \\
I(V;Y_2|W_2) &= I(V;Y_2|U,W_2) \nonumber\\
I(U;V|W_1,W_2,Y_1) &= I(U;V|W_1,W_2,Y_2) \nonumber \\
I(W_2;Y_1|W_1) &= I(W_1;Y_2|W_2) \nonumber\\
I(W_2;Y_1|U,W_1) &= I(W_1;Y_2|U,W_2) \label{eq:auxeq}\\
I(W_2;Y_1|V,W_1) &= I(W_1;Y_2|V,W_2) \nonumber\\
I(W_2;Y_1|U,V,W_1) &= I(W_1;Y_2|U,V,W_2) \nonumber.
\end{align}
Then the set of rate pairs $R_1,R_2$ satisfying
\begin{align*}
R_1 &\leq I(U;Y_1|W_1) \\
R_2 &\leq I(V;Y_2|W_2)
\end{align*}
constitutes an outer bound to the capacity region of the discrete memoryless broadcast channel.
\end{lemma}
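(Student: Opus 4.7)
The plan is to apply the standard Fano-plus-Csisz\'ar-sum template for broadcast outer bounds, with a specific choice of auxiliaries that makes the seven equality constraints arise essentially automatically. Starting from an $(n, 2^{nR_1}, 2^{nR_2}, \epsilon_n)$ code with independent messages $M_1, M_2$, Fano's inequality gives $nR_j \leq I(M_j; Y_j^n) + n\epsilon_n$ for $j = 1, 2$, with $\epsilon_n \to 0$.

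The identification I would use is, for each $i \in \{1, \ldots, n\}$, $U_i = M_1$, $V_i = M_2$, $W_{1,i} = Y_1^{i-1}$, and $W_{2,i} = Y_{2,i+1}^n$, together with a time-sharing index $Q$ uniform on $\{1, \ldots, n\}$, independent of the code. Setting $\bU = M_1$, $\bV = M_2$, $\bW_1 = (Y_1^{Q-1}, Q)$, $\bW_2 = (Y_{2,Q+1}^n, Q)$, $\bX = X_Q$, and $\bY_j = Y_{j,Q}$, the memoryless property of the channel yields the Markov chain $(\bU, \bV, \bW_1, \bW_2) \to \bX \to (\bY_1, \bY_2)$, while $\bU \perp \bV$ follows from message independence. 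A chain-rule expansion of $I(M_j; Y_j^n)$ followed by absorption of $Q$ gives $R_1 \leq I(\bU; \bY_1 \mid \bW_1) + \epsilon_n$ and $R_2 \leq I(\bV; \bY_2 \mid \bW_2) + \epsilon_n$.

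The seven equalities split into two groups. Equalities (4)--(7) hold \emph{exactly} after time-sharing: for each conditioning $A \in \{\emptyset, M_1, M_2, (M_1, M_2)\}$, the Csisz\'ar sum identity \cite{czk78} states $\sum_i I(Y_{2,i+1}^n; Y_{1,i} \mid Y_1^{i-1}, A) = \sum_i I(Y_1^{i-1}; Y_{2,i} \mid Y_{2,i+1}^n, A)$, which is precisely the content of the corresponding equality after dividing by $n$. Equalities (1)--(3) hold only up to an $O(\epsilon_n)$ slack: a short chain-rule computation shows that the two sides of (1) differ by $(1/n)\, I(M_1; M_2 \mid Y_1^n) \leq \epsilon_n$ (by Fano at receiver~1); (2) is analogous at receiver~2; and the two sides of (3) telescope into a difference of the form $(1/n)\bigl(I(M_1; M_2 \mid Y_1^n) - I(M_1; M_2 \mid Y_2^n)\bigr)$, again $O(\epsilon_n)$ by Fano on both sides.

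The main obstacle is upgrading these vanishing slacks into the \emph{exact} equalities demanded by the lemma. I would close this gap by a standard compactness/continuity argument: Carath\'eodory-type cardinality bounds confine the candidate joint distributions to a compact set, so a convergent subsequence in $n$ produces a limiting distribution on $(U, V, W_1, W_2, X, Y_1, Y_2)$ which, by continuity of mutual information, satisfies all seven equalities exactly, while still preserving the Markov chain, the independence $U \perp V$, and the rate bounds $R_j \leq I(U; Y_j \mid W_j)$. This limiting step is the delicate part of the proof; everything preceding it is essentially bookkeeping around Fano and the Csisz\'ar sum.
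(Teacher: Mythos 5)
Your proposal follows essentially the same route as the paper: the identifications $U=M_1$, $V=M_2$, $W_1=(Y_1^{Q-1},Q)$, $W_2=(Y_{2,Q+1}^n,Q)$ are exactly those used there, with the last four equalities obtained from the Csisz\'ar sum lemma, the first two from Fano plus message independence, and the third from the same telescoping sum collapsing to $\frac1n\bigl(I(M_1;M_2|Y_1^n)-I(M_1;M_2|Y_2^n)\bigr)$. Your additional compactness/Carath\'eodory step merely makes explicit the $n\to\infty$ limit that the paper invokes without elaboration, so the argument is correct and matches the paper's proof in substance.
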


\begin{proof}
The inequalities follows immediately from  Fano's inequality and the following identifications:
\begin{align*}
\hat{W}_{1i} &= Y_1^{i-1} \\
\hat{W}_{2i} &= Y_{2~i+1}^n \\
U & = M_1 \\
V &= M_2.
\end{align*}
We then set $W_{1} = (\hat{W}_1,Q), W_{2} = (\hat{W}_2,Q)$, where $Q$ is an independent random variable chosen uniformly at random from the interval $\{1,...,n\}$.

The last four equalities are a direct application of the Csiszar sum lemma \cite{czk78} and the proof is omitted. The first two equalities follow from Fano's inequality and the independence of $M_1$ and $M_2$; and the proof is again omitted. The third equality follows as follows:
\begin{align*}
& I(U;V|W_1, W_2, Y_1) - I(U;V|W_1,W_2,Y_2) \\
&\quad = \lim_{n \to \infty} \frac 1n \sum_{i=1}^n I(M_1;M_2|Y_1^{i},Y_{2~i+1}^n) - I(M_1;M_2|Y_1^{i-1}, Y_{2~i}^n) \\
&\quad = \lim_{n \to \infty} \frac 1n \left( I(M_1;M_2|Y_1^n) - I(M_1;M_2|Y_2^n) \right) \\
&\quad = 0
\end{align*}
The last step follows from  Fano's inequality. 
\end{proof}

\begin{remark}
We note the following divergence from the normal presentation of the outer bounds: the absence of a sum rate constraint, as well as the presence of a number of equalities.
\end{remark}

We will compare this bound to the following existing bound\footnote{The equivalence of the bounds can be observed from the fact that for the identifications in \cite{nae07} $I(U;V|W,Y_1) = I(U;V|W,Y_2)$, and this implies the bound presented in \cite{lik07}.} for the same setting.
\begin{bound}
\label{bd:outer}
The union of rate pairs $(R_1,R_2)$  that satisfy the following inequalities
\begin{align*}
R_1 &\leq I(U,W;Y_1) \\
R_2 &\leq I(V,W;Y_2) \\
R_1 + R_2 &\leq \min\{I(W;Y_1), I(W;Y_2)\} + I(U;Y_1|W) \\
&\qquad + I(V;Y_2|U,W) \\
R_1 + R_2 &\leq \min\{I(W;Y_1), I(W;Y_2)\} + I(U;Y_1|V,W) \\
&\qquad + I(V;Y_2|W).
\end{align*}
over all $p(u)p(v)p(w|u,v)p(x|u,v,w)p(y_1,y_2|x)$ forms an outer bound to the capacity region.
\end{bound}

\begin{claim}
\label{cl:wk1}
The region specified by the lemma \ref{le:bound} is at least as tight as the region specified by Bound \ref{bd:outer}.
\end{claim}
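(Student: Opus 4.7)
The plan is to show the inclusion $\{\text{Lemma \ref{le:bound} region}\} \subseteq \{\text{Bound \ref{bd:outer} region}\}$ by exhibiting, for every admissible $(U, V, W_1, W_2)$ from Lemma \ref{le:bound}, auxiliaries $(U', V', W')$ for Bound \ref{bd:outer} under which each Bound inequality is implied. The natural ansatz is $U' = U$, $V' = V$, and $W' = (W_1, W_2)$; the required factorisation $p(u)p(v)p(w|u,v)p(x|u,v,w)p(y_1,y_2|x)$ and Markov chain $(U', V', W') \to X \to (Y_1, Y_2)$ are immediate from $U \perp V$ and the Markov chain postulated in Lemma \ref{le:bound}.

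The individual-rate inequalities fall out of a single chain-rule expansion:
\[
I(U', W'; Y_1) = I(W_1; Y_1) + I(U; Y_1 | W_1) + I(W_2; Y_1 | U, W_1) \geq I(U; Y_1 | W_1) \geq R_1,
\]
and symmetrically $I(V', W'; Y_2) \geq I(V; Y_2 | W_2) \geq R_2$. The substance is the four sum-rate inequalities (the $\min$ on each line of Bound \ref{bd:outer} encoding two constraints).

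For each sum-rate inequality, I would start from the Lemma bound $R_1 + R_2 \leq I(U; Y_1 | W_1) + I(V; Y_2 | W_2)$, apply the chain rule to introduce conditioning on $W_2$ in the first summand and on $W_1$ in the second, and then invoke the hypothesised equalities to translate every ``mismatched'' term into the form appearing on the right-hand side of Bound \ref{bd:outer}. Concretely, the first two equalities of Lemma \ref{le:bound} push $V$ into the conditioning of the $U$--$Y_1$ term (and $U$ into the conditioning of the $V$--$Y_2$ term) whenever the target inequality has the shape $I(U;Y_1|V,W') + I(V;Y_2|W')$ rather than $I(U;Y_1|W') + I(V;Y_2|U,W')$; the last four equalities swap every residual $I(W_1;Y_2|\cdot)$ for the corresponding $I(W_2;Y_1|\cdot)$, which lets one match the $I(W';Y_1)$ term, and equality (4) then absorbs the discrepancy needed to match $I(W';Y_2)$ instead. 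After this rewriting, the gap between the Bound right-hand side and the Lemma sum-rate bound collapses, in each of the four cases, to a manifestly nonnegative expression of the form $I(W_i;Y_i) + I(W_2;Y_1|U,V,W_1)$.

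The main obstacle is purely bookkeeping: for each of the four sum-rate bounds one has to pick the chain-rule expansion order so that every wrong-side $I(W_i;Y_j|\cdot)$ term produced is exactly one of those appearing in (4)--(7). Once this pattern is discovered for one bound (e.g.\ $R_1+R_2 \leq I(W';Y_1) + I(U';Y_1|W') + I(V';Y_2|U',W')$, using equalities (2), (5), (7)), the other three follow by the symmetric substitutions using equalities (1), (6), (7) and (4).
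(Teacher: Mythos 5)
Your proposal is correct and follows essentially the same route as the paper: identify $W=(W_1,W_2)$, dispose of the individual-rate constraints by a chain-rule expansion, and prove each sum-rate branch by expanding and invoking the hypothesised equalities (the first two to drop the extra conditioning on $U$ or $V$, equalities (4)--(7) to swap $I(W_2;Y_1|\cdot)$ with $I(W_1;Y_2|\cdot)$), leaving a nonnegative residual such as $I(W_1;Y_1)+I(W_1;Y_2|U,V,W_2)$. The only cosmetic difference is that the paper first uses the third equality, $I(U;V|W_1,W_2,Y_1)=I(U;V|W_1,W_2,Y_2)$, to show the two sum-rate right-hand sides of Bound \ref{bd:outer} coincide, so it needs to verify only the two $\min$ branches rather than carrying out your four-case bookkeeping.
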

\begin{proof}
We need to show that any $(R_1,R_2)$ satisfying the constraints of Lemma  \ref{le:bound} is contained in the region described by Bound \ref{bd:outer}.
To show the inclusion, we set $W=(W_1,W_2)$. 

Observe that
\begin{align*}
&I(V;Y_2|U,W_1,W_2) = I(V;Y_2|W_1,W_2) \\
&\qquad \qquad  - I(U;V|W_1,W_2) + I(V;U|W_1,W_2,Y_2) 
\end{align*}
Using the equality $$I(V;U|W_1,W_2,Y_2) = I(V;U|W_1,W_2,Y_1)$$ it is easy to see that
\begin{align}
& I(U;Y_1|W_1,W_2) + I(V;Y_2|U,W_1,W_2) \nonumber \\
& = I(U;Y_1|V,W_1,W_2) + I(V;Y_2|W_1,W_2). \label{eq:obs1}
\end{align}
Therefore the two sum rate constraints in Lemma \ref{bd:outer} are identical. 

Hence it  suffices to prove that
\begin{align*}
& I(U;Y_1|W_1) + I(V;Y_2|W_2) \\
&\quad \leq I(W_1,W_2;Y_1) + I(U;Y_1|W_1,W_2) \\
&\qquad + I(V;Y_2|U, W_1,W_2).
\end{align*}
(The other one obtained by replacing $I(W_1,W_2;Y_1)$ with $I(W_1,W_2;Y_2)$ follows similarly. To get the symmetric expression, just use \eqref{eq:obs1}.)

Observe that
\begin{align*}
& I(U,W_1,W_2;Y_1) +  I(V;Y_2|U,W_1,W_2) \\
&= I(W_1;Y_1) + I(U;Y_1|W_1) + I(W_2;Y_1|U,W_1) \\
& \quad + I(V;Y_2|U, W_1,W_2) \\ 
& = I(W_1;Y_1) + I(U;Y_1|W_1) + I(W_1;Y_2|U,W_2) \\
& \quad + I(V;Y_2|U, W_1,W_2)  \\
& = I(W_1;Y_1) + I(U;Y_1|W_1) + I(V, W_1;Y_2|U,W_2) \\
& = I(W_1;Y_1) + I(U;Y_1|W_1) + I( W_1;Y_2|U,V,W_2) \\
& \quad + I(V;Y_2|U,W_2) \\
& \stackrel{(a)}{=} I(W_1;Y_1) + I(U;Y_1|W_1) + I( W_1;Y_2|U,V,W_2) \\
& \quad + I(V;Y_2|W_2) \\
& \geq I(U;Y_1|W_1)  + I(V;Y_2|W_2),
\end{align*}
where $(a)$ follows from the following:
\begin{equation*}
 I(V;Y_2|U,W_2)  = I(V;Y_2|W_2).
\end{equation*}
\end{proof}

\section{Two receiver broadcast channel with common message as well as private messages}
\label{se:com}

The following outer bound was presented in \cite{lks08} for the capacity region of the broadcast channel for two receivers with a common message as well as private messages.

\begin{bound} \cite{lks08} The capacity region is a subset of the {\em New-Jersey} region, which can be obtained by taking the union of rate triples $(R_0,R_1,R_2)$ satisfying
\begin{align*}
R_0 &\leq \min{I(T;Y_1|W_1), I(T;Y_2|W_2)} \\
R_1 & \leq I(U;Y_1|W_1) \\
R_2 &\leq I(V;Y_2|W) \\
R_0 + R_1 &\leq I(T,U;Y_1|W_1) \\
R_0 + R_1 & \leq I(U;Y_1|T,W_1,W_2) + I(T,W_1;Y_2|W_2) \\
R_0 + R_2 &\leq I(T,U;Y_2|W_2) \\
R_0 + R_2 & \leq I(V;Y_2|T,W_1,W_2) + I(T,W_2;Y_1|W_1) \\
R_0 + R_1 + R_2 &\leq I(U;Y_1|T,V,W_1,W_2) + I(T, V, W_1;Y_2|W_2) \\
R_0 + R_1 + R_2 &\leq I(V;Y_2|T,U,W_1,W_2) + I(T, U, W_2;Y_1|W_1) \\
R_0 + R_1 + R_2 &\leq I(U;Y_1|T,V,W_1,W_2) + I(T,W_1,W_2;Y_1) \\
&\quad + I(V;Y_2|T,W_1,W_2) \\
R_0 + R_1 + R_2 &\leq I(V;Y_2|T,U,W_1,W_2) + I(T,W_1,W_2;Y_2) \\
&\quad + I(U;Y_1|T,W_1,W_2) 
\end{align*}
for some $p(u)p(v)p(t)p(w_1,w_2|u,v,t)p(x|u,v,t,w_1,w_2)p(y_1,y_2|x)$. Further one can restrict $X$ to be a deterministic function of $(u,v,t,w_1,w_2)$ and also assume that the marginals of $U,V,T$ are uniform.
\end{bound}

Similar to lemma \ref{le:bound} we can write an outer bound for this case as well, and this region is at least as tight as the {\em New Jersey} outer bound.

\begin{lemma}
\label{le:bound1}
Consider the set of all random variables $T,U,V,W_1,W_2$ such that $(T,U,V,W_1,W_2) \to X \to (Y_1,Y_2)$ for a Markov chain. Further assume that $T$,$U$, and $V$ are independent; and the distribution $(U,V,W_1,W_2,X,Y_1,Y_2)$ satisfies the following equalities:
\begin{align}
I(T;Y_1|W_1) &= I(T;Y_2|W_2) \nonumber \\
I(T;Y_1|W_1) &= I(T;Y_1|V,W_1) = I(T;Y_1|U, W_1) \nonumber \\
&= I(T;Y_1|U,V,W_1) \nonumber \\
I(T;Y_2|W_2) &= I(T;Y_2|V,W_2) = I(T;Y_2|U, W_2) \label{eq:auxeq1} \\
& = I(T;Y_2|U,V,W_2) \nonumber \\
I(U;Y_1|W_1) &= I(U;Y_1|V,W_1) = I(U;Y_1|T, W_1) \nonumber\\
& = I(U;Y_1|T,V,W_1) \nonumber \\
I(V;Y_2|W_2) &= I(V;Y_2|U,W_2) =  I(V;Y_2|T, W_2) \nonumber \\
& = I(V;Y_2|T,U,W_2), \nonumber 
\end{align}
\begin{equation}
\label{eq:auxeq2}
I(B_1;B_2|A,W_1,W_2,Y_1) = I(B_1;B_2|A,W_1,W_2,Y_2) 
\end{equation}
holds for all $A \subseteq \{T,U,V\}, B_1 \subseteq\{T,U\}, B_2\subseteq\{T,V\}$, and
\begin{equation}
\label{eq:auxeq3}
I(W_2;Y_1|A,W_1) = I(W_1;Y_2|A,W_2)
\end{equation}
holds for all $A \subseteq \{T,U,V\}$.

Then the set of rate tuples $(R_0,R_1,R_2)$ satisfying
\begin{align*}
R_0 &\leq \min\{I(T;Y_1|W_1), I(T;Y_2|W_2)\}\\
R_1 &\leq I(U;Y_1|W_1) \\
R_2 &\leq I(V;Y_2|W_2)
\end{align*}
constitutes an outer bound to the capacity region of the discrete memoryless broadcast channel.

Further, just as in Lemma \ref{le:bound1} one can restrict $X$ to be a deterministic function of $(u,v,t,w_1,w_2)$ and also assume that the marginals of $U,V,T$ are uniform.
\end{lemma}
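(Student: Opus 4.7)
The plan is to mimic the argument for Lemma~\ref{le:bound} with the single additional identification $T = M_0$, while keeping $U = M_1$, $V = M_2$, $\hat W_{1i} = Y_1^{i-1}$, $\hat W_{2i} = Y_{2,i+1}^n$, $W_1 = (\hat W_1, Q)$, $W_2 = (\hat W_2, Q)$, and $Q$ uniform on $\{1,\dots,n\}$ independent of the messages. Independence of $T, U, V$ is inherited from the independence of $M_0, M_1, M_2$, and the single-letter Markov chain $(T,U,V,W_1,W_2) \to X \to (Y_1,Y_2)$ (after conditioning on $Q=i$) follows from the memoryless nature of the channel. The rate bounds are then routine: Fano's inequality at receiver~$1$ gives $nR_0 \le I(M_0;Y_1^n) + n\epsilon_n$ and $nR_1 \le I(M_1;Y_1^n) + n\epsilon_n$, which single-letterise to $R_0 \le I(T;Y_1|W_1) + o(1)$ and $R_1 \le I(U;Y_1|W_1) + o(1)$; Fano at receiver~$2$ yields the remaining two bounds.

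The equalities divide into three natural groups. The relations \eqref{eq:auxeq3} are the Csiszar sum lemma applied with an arbitrary $A \subseteq \{T,U,V\}$ in the conditioning, the computation being identical to that for the last four identities of Lemma~\ref{le:bound}. The relations \eqref{eq:auxeq1} each compare a conditional mutual information to the same quantity with one or two additional independent messages in the conditioning; by Fano applied at the receiver that decodes the ``outer'' variable, plus the independence of the messages, each such pair is sandwiched between $nR - o(n)$ and $nR$ (where $R$ is the rate of the outer message), so the two sides agree up to $o(1)$. The cross-equality $I(T;Y_1|W_1) = I(T;Y_2|W_2)$ holds by the same mechanism, both sides being $R_0 + o(1)$ because $M_0$ is decodable at both receivers.

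For the family \eqref{eq:auxeq2} I would use the telescoping identity from the proof of Lemma~\ref{le:bound}: summing $I(B_1;B_2 | A, Y_1^i, Y_{2,i+1}^n) - I(B_1;B_2 | A, Y_1^{i-1}, Y_{2,i}^n)$ over $i = 1,\dots,n$ collapses to $I(B_1;B_2 | A, Y_1^n) - I(B_1;B_2 | A, Y_2^n)$, so the equality reduces to checking that both endpoint terms are $o(n)$. The admissibility $B_1 \subseteq \{T,U\}$ and $B_2 \subseteq \{T,V\}$ is exactly what makes this work: every symbol of $B_1$ is decodable at receiver~$1$ (so $H(B_1 | A, W_1, W_2, Y_1^n) = o(n)$) and every symbol of $B_2$ is decodable at receiver~$2$ (so $H(B_2 | A, W_1, W_2, Y_2^n) = o(n)$), and $I(B_1;B_2|\Omega) \le \min\{H(B_1|\Omega), H(B_2|\Omega)\}$ closes each case.

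The main obstacle is keeping the Fano bookkeeping straight in cases where $T$ appears on both sides (e.g.\ $B_1 = \{T,U\}$, $B_2 = \{T,V\}$): one must decompose $I(B_1;B_2|\Omega) = H(T|\Omega) + I(U;V|T,\Omega)$, bound $H(T|\Omega)$ via common-message decoding at the relevant receiver, and then treat the residual $I(U;V|T,\Omega)$ exactly as in Lemma~\ref{le:bound}. The final assertion---that $X$ may be taken to be a deterministic function of $(T,U,V,W_1,W_2)$ and that the marginals of $T,U,V$ may be taken uniform---is a standard convexification/cardinality argument identical to the one used for Bound~2 and does not interact with the new equalities.
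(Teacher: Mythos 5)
Your proposal is correct and follows essentially the same route as the paper, whose proof of this lemma consists only of the remark that $T=M_0$ is the sole new identification and that the arguments of Lemma~\ref{le:bound} carry over; your Fano sandwiching for the equalities in \eqref{eq:auxeq1}, the conditioned Csiszar sum lemma for \eqref{eq:auxeq3}, and the telescoping-plus-decodability argument for \eqref{eq:auxeq2} are exactly the omitted details.
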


\begin{proof}
$T=M_0$ is the only new identification as compared to Lemma \ref{le:bound}. The arguments for this lemma are similar to those of Lemma \ref{le:bound} and are omitted.
\end{proof}

\begin{claim}
\label{cl:wk2}
The region presented by Lemma \ref{le:bound1} is at least as tight as the {\em New-Jersey} outer bound.
\end{claim}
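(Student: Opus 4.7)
The plan is to follow the same template as Claim \ref{cl:wk1}. I would set $W = (W_1,W_2)$ in the New-Jersey bound, which reduces the task to showing that each of its sum-rate inequalities (one each of type $R_0+R_1$ and $R_0+R_2$, and four of type $R_0+R_1+R_2$) is implied by the three single-letter bounds $R_0 \le I(T;Y_1|W_1)$, $R_1 \le I(U;Y_1|W_1)$, $R_2 \le I(V;Y_2|W_2)$ in combination with the equalities \eqref{eq:auxeq1}, \eqref{eq:auxeq2}, \eqref{eq:auxeq3}. The single-rate constraint on $R_0$ matches immediately because the first line of \eqref{eq:auxeq1} gives $I(T;Y_1|W_1) = I(T;Y_2|W_2)$, and the other two single-rate constraints are already identical.

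The engine of the argument consists of three maneuvers used repeatedly. First, the chain rule expands any joint mutual information such as $I(T,W_1;Y_2|W_2)$ or $I(T,V,W_1;Y_2|W_2)$ into an $I(T; \cdot)$ or $I(V; \cdot)$ term plus a $I(W_1; Y_2 | \cdot, W_2)$ term. Second, \eqref{eq:auxeq3} rewrites $I(W_1;Y_2|A,W_2)$ as $I(W_2;Y_1|A,W_1)$ for any $A \subseteq \{T,U,V\}$, which can then be reabsorbed into a $Y_1$-side joint information. Third, \eqref{eq:auxeq1} lets me drop or insert conditioning on $T,U,V$ in the single-letter terms. The Csiszar-type identity \eqref{eq:auxeq2} plays exactly the role that the third equality of Lemma \ref{le:bound} played in Claim \ref{cl:wk1}: it yields swaps such as
\begin{align*}
& I(U;Y_1|T,V,W_1,W_2) + I(V;Y_2|T,W_1,W_2) \\
&\quad = I(U;Y_1|T,W_1,W_2) + I(V;Y_2|T,U,W_1,W_2),
\end{align*}
which collapses the two three-summand triple-sum constraints of the New-Jersey bound onto the two two-summand triple-sum constraints.

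As a representative case, for the triple sum $I(U;Y_1|T,V,W_1,W_2) + I(T,V,W_1;Y_2|W_2)$ I would expand the second term as $I(T;Y_2|W_2) + I(V;Y_2|T,W_2) + I(W_1;Y_2|T,V,W_2)$, use \eqref{eq:auxeq3} with $A=\{T,V\}$ to convert the last term to $I(W_2;Y_1|T,V,W_1)$, merge it with the first summand to produce $I(U,W_2;Y_1|T,V,W_1)$, and drop $I(W_2;Y_1|T,U,V,W_1)\ge 0$. This leaves $I(U;Y_1|T,V,W_1) + I(V;Y_2|T,W_2) + I(T;Y_2|W_2)$, which equals $I(U;Y_1|W_1) + I(V;Y_2|W_2) + I(T;Y_1|W_1)$ by \eqref{eq:auxeq1} and so dominates $R_1+R_2+R_0$. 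The pair-sum constraints such as $R_0+R_1 \le I(U;Y_1|T,W_1,W_2) + I(T,W_1;Y_2|W_2)$ are handled by the same expand/swap/drop pattern with $V$ absent.

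The main obstacle is bookkeeping rather than any new idea: there are nine inequalities, conditioning sets reach size four, and each verification requires selecting the correct instance of \eqref{eq:auxeq2} or \eqref{eq:auxeq3}. Once the Csiszar-pair identity displayed above is established and the chain-rule/swap pattern is fixed, all remaining cases are mechanical and parallel Claim \ref{cl:wk1} very closely, so the proof can be organised as a short list of substitutions with the bulk of routine algebra omitted.
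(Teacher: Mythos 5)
Your proposal is correct and follows exactly the route the paper intends: the paper's own proof of Claim \ref{cl:wk2} is simply stated as ``similar to Claim \ref{cl:wk1} and omitted,'' and your expand/swap/drop scheme using \eqref{eq:auxeq1}--\eqref{eq:auxeq3}, including the Csiszar-pair exchange and the worked triple-sum case, is a faithful (indeed more detailed) instantiation of that argument. The only cosmetic slip is that the New-Jersey bound already carries $W_1,W_2$ as separate auxiliaries, so no identification $W=(W_1,W_2)$ is actually needed; this does not affect the validity of your argument.
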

\begin{proof}
Again the arguments are similar to those of Claim \ref{cl:wk1} and are omitted.
\end{proof}

\section{Conclusion}
\label{se:conc}
An outer bound to the capacity region to the two receiver broadcast channel (with and without common information) is determined. In both cases, this is at least as tight as the currently best known bounds. 

\section*{Acknowledgement}
The author wishes to thank Prof. Bruce Hajek as this work benefitted tremendously from the discussions between the author and Prof. Hajek, during Prof. Hajek's visit to the Chinese University of Hong Kong in January 2008.

\bibliographystyle{IEEEtran}
\bibliography{mybiblio}
%\bibliography{C:/Users/cnair/Desktop/documents/tex/mybiblio}

\end{document}